\newcommand{\Tinverse}{{T^{-1}}}
\newcommand{\AAA}{{\mathbb A}}
\newcommand{\PP}{{\mathbb P}}
\newcommand{\UU}{{\mathbb U}}
\newcommand{\eps}{{\varepsilon}}
\newcommand{\algebra}{{\mathcal B}}
\newcommand{\foralmost}{{\text{ for a.e. }}}
\newcommand{\one}{{\mathbb{1}}}
\renewcommand{\restriction}{|}
\newtheorem{theorem}{Theorem}[section]
\newtheorem{lemma}[theorem]{Lemma}
\newtheorem{definition}[theorem]{Definition}
\newtheorem{example}[theorem]{Example}
\newtheorem{remark}{Remark}
\title{\LARGE \bf
On the Almost Global Stability of Invariant Sets*
}
\author{\"Ozkan Karabacak$^{1,2}$, Rafael Wisniewski$^{2}$ and John Leth$^{2}$
\thanks{*This work has been supported by the Independent Research Fund Denmark (DeBaTe) and by the Scientific and Technological Research Council of Turkey (111E475)}
\thanks{$^{1}$Department of Electronics and Communication Engineering, Istanbul Technical University, 34469 Istanbul, Turkey
         {\tt\small ozkan2917@gmail.com}}%
\thanks{$^{2}$Department of Electronic Systems, Automation and Control, Aalborg University, Fredrik Bajers Vej 7 C, 9220 Aalborg East, Denmark
        {\tt\small ozk@es.aau.dk, raf@es.aau.dk, jjl@es.aau.dk}}%
}
\begin{document}

\maketitle
\thispagestyle{empty}
\pagestyle{empty}

\begin{abstract}
For a given invariant set of a dynamical system, it is known that the existence of a Lyapunov-type density function, called Lyapunov density or Rantzer's density function, may imply the convergence of almost all solutions to the invariant set, in other words, the almost global stability (also called almost everywhere stability) of the invariant set. For discrete-time systems, related results in literature assume that the state space is compact and the invariant set has a local basin of attraction. We show that these assumptions are redundant. Using the duality between Frobenius-Perron and Koopman operators, we provide a Lyapunov density theorem for discrete-time systems without assuming the compactness of the state space or any local attraction property of the invariant set. As a corollary to this new discrete-time Lyapunov density theorem, we provide a continuous-time Lyapunov density theorem which can be used as an alternative to Rantzer's original theorem, especially where the solutions are known to exist globally.
\end{abstract}

\section{Introduction}

Stability of invariant sets of dynamical systems with respect to perturbations in initial conditions has always been a crucial topic in systems and control. A control task aims at global asymptotic stability of the invariant set under consideration, which will ensure the resilience of dynamic behaviour with respect to all changes in initial conditions. Recently, as an alternative to global asymptotic stability, a weaker notion called \emph{almost global stability} (also called \emph{almost everywhere stability}), which describes convergence of almost all solutions to the invariant set, has been introduced and found some applications in nonlinear analysis \cite{Monzon2005,Monzonthesis,Vaidya2008,Potrie2009,Zuyev2015,Rajaram2015} and control engineering \cite{Rantzer2001proc,Prajna2004,Vaidya2010}. 
The main idea was first introduced by Milnor \cite{Milnor1985}, where a new notion of attractor is proposed as a minimal\footnote{It is minimal in the sense that no other proper subset of it attracts almost the same set of initial states.} closed invariant set that attracts a set of positive measure. Milnor proved that if the state space is compact, there always exists a maximal attractor, called the \emph{likely limit set}, which attracts almost all initial conditions. Obviously, if the state space is not compact, such a maximal attractor may not exist. 
A natural question along this line is to find sufficient conditions, even for systems with a non-compact state space, that ensure convergence of almost all solutions to a given invariant set, hence the almost global stability of the invariant set. Rantzer \cite{Rantzer2001} introduced a dual Lyapunov approach to the almost global stability problem, where the existence of a density function, satisfying some Lyapunov-like conditions, is shown to imply almost global stability of an equilibria. This density function, called Lyapunov density or Rantzer's density function, has been effectively applied to nonlinear feedback control \cite{Prajna2004} of continuous-time systems. For discrete-time systems, Vaidya and Mehta \cite{Vaidya2008} {use the Frobenius-Perron operator to} provide a counterpart of Rantzer's Lyapunov density theorem for invariant sets, assuming that the state space is compact and the invariant set is almost locally stable, namely that it attracts almost every point in some neighborhood of it. {Using the same assumptions, a Lyapunov density theorem for invariant sets has been given for the continuous-time case in \cite{Rajaram2010}.}
Also, the discrete-time Lyapunov density theorem in \cite{Vaidya2008} has been applied to a feedback control design \cite{Vaidya2010}.
Using the Koopman operator, another approach for the global stability of nonlinear systems is given in \cite{Mauroy2013,Mauroy2014}, which provides sufficient conditions for the global asymptotic stability of an equilibrium. 
Another related concept called occupation measure has been utilised in \cite{henrion2014} for the convex computation of the region of attraction.

In this paper, {we use the duality between Frobenius-Perron and Koopman operators (see \cite{Foguel1969} and \cite{Lasota1994} for the general theory of Markov processes where these operators appear) and} provide a discrete-time version of Rantzer's continuous-time density theorem \cite{Rantzer2001} for invariant sets, that do not assume the compactness of the state space and the local stability of the invariant set \footnote{It is claimed in \cite{Rantzer2001} that Theorem~2 in \cite{Rantzer2001} can be viewed as a discrete-time counterpart for the Lyapunov density theorem (Theorem~1 in \cite{Rantzer2001}). However, this is not straightforward, and the complete discrete-time Lyapunov density theorem is given  in \cite{Vaidya2008} with much stronger assumptions.}. Subsequently, we use our theorem for discrete-time to prove a generalization of the continuous-time density theorem \cite{Rantzer2001} to invariant sets \footnote{For continuous-time, such a generalization appeared in \cite{Rantzer2001proc} under a boundedness assumption on the vector field. See also \cite{Zuyev2015} for another generalization.}. Both theorems require very similar sets of conditions for a Lyapunov density function, leading to a unification of the theories for discrete- and continuous-time systems, which we summarize below postponing the detailed descriptions for the more general statements of the theorems to Sections~\ref{Sec:Discrete} and \ref{sec:continuous}.  

Let us consider a dynamical system on $\mathbb R^n$ given by
\begin{equation}
\label{eq:general}
x^+(t)=F(x(t)), \ t\in\mathbb T,
\end{equation}
where the time set $\mathbb T$ is either $\mathbb R$ or $\mathbb N$, and $x^+(t):=\frac{dx(t)}{dt}$ if $\mathbb T=\mathbb R$ and  $x^+(t):=x(t+1)$ if $\mathbb T=\mathbb N$. For $\mathbb T=\mathbb N$, we 
assume that $F$ is nonsingular. For $\mathbb T = \mathbb R$, 
we 
assume that $F$ is locally Lipschitz and that solutions of \eqref{eq:general} uniquely exist for all time and for all initial conditions. 
{As in \cite{Vaidya2008} and \cite{Vaidya2010}, we use Frobenius-Perron operator $\PP$ to characterize Lyapunov density in discrete-time. $\PP$ captures the evolution of a distribution of initial states under the discrete-time dynamics. Similarly, the evolution of distributions in continuous-time is captured by the infinitesimal operator $\AAA$. Precise definitions for these operators will be given in the sequel.
An invariant set $A\subset \mathbb R^n$ is said to be \emph{almost globally stable} if there exists a subset $N\subset \mathbb R^n$ with zero Lebesgue measure such that $x(t)\to A$ as $t\to\infty$ for all $x(0)\in N^c$.
}

For an invariant set $A\subset \mathbb R^n$, we say that a continuously differentiable real function $\rho$ 
defined on $A^c$ is a Lyapunov density for $A$, if it is positive almost everywhere on $A^c$, integrable away from $A$ and properly subinvariant on $A^c$. Specifically, we say that $\rho$ is properly subinvariant  on $A^c$ if $\mathbb L\rho(x)< 0$ for almost all $x \in A^c$, where $\mathbb L\rho = \PP\rho - \rho$ for discrete time systems and $\mathbb L\rho = \AAA\rho$ for continuous time systems.

Our main result can now be stated in a unified form for discrete- and continuous-time systems as follows: \newline
\textbf{Main result:} A compact invariant set $A$ is almost globally stable if there exists a Lyapunov density for $A$.

Section~\ref{sec:preliminaries} summarizes the required preliminary definitions and results on Frobenius-Perron operator, Koopman operator and their duality. The detailed descriptions and proofs of the main result for discrete time and continuous timme are given in Section~\ref{Sec:Discrete} and Section~\ref{sec:continuous}, respectively. Finally, some illustrative examples of applications of the main result are provided in Section~\ref{sec:examples}.


\section{Preliminary Definitions and Tools}
\label{sec:preliminaries}
By a measure $\mu$ on $\mathbb R^n$, we mean a $\sigma$-finite measure defined on the Borel $\sigma$-algebra $\mathcal B$ of $\mathbb R^n$. In particular, Lebesgue measure on $\mathbb R^n$ will be denoted by $m$. A measure $\mu$ is said to be absolutely continuous (with respect to $m$) if $\mu(A)=0$ whenever $m(A)=0$. 

We assume that the dynamics given by $T:\mathbb R^n\to \mathbb R^n$ is nonsingular; namely, $A\in\algebra$ and $m(A)=0$ implies that $m(\Tinverse A)=0$.


The Radon-Nikodym theorem states that if $\mu$ is absolutely continuous then there exist a unique (modulo sets of zero Lebesgue measure) nonnegative measurable function $\rho$ such that   
\begin{equation}
 \label{eq:RadonNikodym}
 \mu(A)=\int_A \rho\ dm.
\end{equation}
$\rho$ is called the Radon-Nikodym derivative of $\mu$ (with respect to $m$).
On the other hand, if $\rho$ is a nonnegative measurable function then $\mu(A):=\int_A\rho\ dm$ defines an absolutely continuous measure on $\mathbb R^n$. Therefore, there is a one-to-one correspondence between the set of absolutely continuous measures on $\mathbb R^n$ and the set of equivalence classes of non-negative measurable functions on $\mathbb R^n$, where equivalence classes are defined as sets of functions that differs only on sets of zero Lebesgue measure. Both of these sets are denoted by $\mathcal M^+(\mathbb R^n)$. Naturally, this extends to another equivalence between absolutely continuous signed measures on $\mathbb R^n$ and the set of equivalence classes of measurable functions on $\mathbb R^n$, both of which are denoted by $\mathcal M(\mathbb R^n)$. In addition, the linear vector space of all signed measures on $\mathbb R^n$ is denoted by $\overline{\mathcal M}(\mathbb R^n)$ and the space of equivalence classes of integrable functions on $\mathbb R^n$ is denoted by $\mathcal L_1(\mathbb R^n)$. Note that $\overline{\mathcal M}(\mathbb R^n)\supset \mathcal M(\mathbb R^n)\supset 
\mathcal L_1(\mathbb R^n)$.

\subsection{Frobenius-Perron Operator}
\label{sec:Perron}
The evolution of distributions under the discrete-time dynamics of \eqref{eq:general} can be captured by a linear operator $\PP:\overline{\mathcal M}(\mathbb R^n)\to \overline{\mathcal M}(\mathbb R^n)$ (called Frobenius-Perron operator) defined as
\begin{equation}
\label{eq:Perrononmeasures}
 (\PP \mu)(A):=\mu(\Tinverse A).
\end{equation}
Assume that $\mu$ is absolutely continuous. Since $T$ is nonsingular 
\begin{multline}
m(A)=0 \implies m(\Tinverse A)=0\\ \implies \mu(\Tinverse A)=0\implies (\PP\mu)(A)=0
\end{multline}
so $\PP\mu$ is also absolutely continuous.
Therefore, $\PP$ maps $\mathcal M^+(\mathbb R^n)$ to itself and similarly $\PP$ maps $\mathcal M(\mathbb R^n)$ to itself.
For a nonnegative measurable function $\rho\in \mathcal M^+(\mathbb R^n)$, $\PP\rho$ is defined as the Radon-Nikodym derivative of $\PP\mu$ with respect to $m$. Using \eqref{eq:RadonNikodym} and \eqref{eq:Perrononmeasures}, it follows that\footnote{We allow here the integral to be infinite.}
\begin{equation}
\label{eq:Perronondensities}
 \int_A \PP \rho dm =\int_{\Tinverse A}\rho dm.
\end{equation}
Substituting $A=\mathbb R^n$ in \eqref{eq:Perronondensities} implies that $\PP$ maps integrable functions to integrable functions. Hence, the restriction of $\PP$ to $\mathcal L_1(\mathbb R^n)$, denoted by $\PP\restriction_{\mathcal L_1(\mathbb R^n)}$,  produces a map $\mathcal L_1(\mathbb R^n)\to\mathcal L_1(\mathbb R^n)$ which is a Markov operator. Namely, $\PP=\PP\restriction_{\mathcal L_1(\mathbb R^n)}$ satisfies the following conditions:

$\rho\geq0 \implies \PP\rho\geq0$ and $\|\PP\rho\|=\|\rho\|.$
\subsection{Koopman Operator}
\label{sec:Koopman}
A dual method to capture the statistical behaviour of the deterministic system \eqref{eq:general} is via the Koopman operator $\UU$, which describes the evolution of the values of observables under the dynamics of \eqref{eq:general}.
Define $\UU:\mathcal M(\mathbb R^n)\to \mathcal M(\mathbb R^n)$ as
\begin{equation}
\label{eq:Koopmanonmeasurables}
 (\UU f)(x):=f(T x).
\end{equation}
Clearly, $\UU$ is linear and maps positive functions to positive functions. It also maps bounded functions to bounded functions. Hence $\UU$ can be restricted to $\mathcal L_\infty$, the normed vector space of equivalence classes of essentially bounded measurable functions. 

\subsection{Duality between Frobenius-Perron and Koopman Operators}
\label{sec:duality}
Let the scalar product\footnote{That is, a non-degenerate bilinear function in $\mathcal L_1\times\mathcal L_\infty$, see \cite{greub}.} $\langle\cdot,\cdot\rangle$ between $\mathcal L_1$ and $\mathcal L_\infty$ be defined by 
\begin{equation}
 \label{eq:bracket}
 \langle\rho,f\rangle:=\int\rho f\ dm, \quad\rho\in \mathcal L_1,~~f\in\mathcal L_\infty.
\end{equation}
Note that since $\rho\in \mathcal L_1,~~f\in\mathcal L_\infty$, we have $\langle\rho,f\rangle\leq\text{ess}\sup f\int\rho dm<\infty$. The duality between $\PP:\mathcal L_1\to\mathcal L_1$ and $\UU:\mathcal L_\infty\to \mathcal L_\infty$ with respect to $\langle\cdot,\cdot\rangle$ can be seen by first observing that
\begin{multline}\label{eq:dualityEQ}
\langle\PP\rho,{1}_A\rangle=\int \PP \rho 1_A \ dm = \int_A \PP \rho \ dm
=\int_{\Tinverse A}\rho\ dm\\=\int \rho 1_{\Tinverse A}\ dm=\langle\rho,1_{\Tinverse A}\rangle=\langle\rho,\UU 1_A\rangle, 
\end{multline}
where $1_A$ denotes the characteristic function for $A\subset \mathbb R^n$.
Since any function in $\mathcal L_\infty$ can be approximated by characteristic functions we have the following duality
\begin{equation}
 \label{eq:duality}
 \langle\PP \rho, f\rangle=\langle\rho,\UU f\rangle, \quad \rho\in\mathcal L_1, \ f\in\mathcal L_\infty.
\end{equation}
\begin{remark}\label{re:duality}
By \eqref{eq:bracket} and \eqref{eq:dualityEQ}, this duality persists even for general measurable functions when considering $\PP$ and $\UU$ as functions ${\mathcal M}(\mathbb R^n)\to {\mathcal M}(\mathbb R^n)$.
\end{remark}

\subsection{An invariant set $A$ and $\PP$ restricted to $A^c$}

Let $A\subset \mathbb R^n$ be an invariant set of the dynamics $F$, that is $F(A)\subset A$. We observe that for any set $V\subset A^c$, $F^{-1}(V)\subset A^c$, $\PP$ maps $\overline{\mathcal M}(A^c)$ to itself. Moreover, since $A^c$ may not be invariant, the (restricted) operator  $\PP\restriction_{A^c}$ may not be a Markov operator, but it is a sub-Markov operator, namely it satisfies the following conditions:

$\rho\geq0 \implies \PP\restriction_{A^c}\rho\geq0$ and $\|\PP\restriction_{A^c}\rho\|\leq\|\rho\|$.

Let $A_\eps$ denote the $\eps$-neighborhood of $A$, namely $A_\eps=\{x\in \mathbb R^n:\inf_{y\in A} d(x,y)<\eps\}$, where $d$ denotes the standard metric on $\mathbb R^n$. We also write $A_\eps^c$ for $(A_\eps)^c$.
We say that $\rho\in\mathcal M^+(A^c)$ is integrable away from $A$ if $\int_{A_\eps}\rho(x)\;m(dx)$ is finite for all $\eps>0$. $\rho\in\mathcal M^+(A^c)$ is said to be properly subinvariant on $A^c$ if $\PP\restriction_{A^c}\rho(x)<\rho(x)$ for almost all $x\in A^c$.
 





\section{Discrete-Time Case}
\label{Sec:Discrete} 
In this section, we consider almost global stability of an invariant set $A$ for a discrete-time system. In particular, we prove that $A$ is almost globally stable if there exists a density $\rho$ that is positive, integrable away from $A$ and properly subinvariant on $A^c$, that is, a discrete-time Lyapunov density as defined below. For the definitions of properly subinvariant density and integrability away from a subset, see the previous subsection.

We consider the dynamical system \eqref{eq:general} with $\mathbb T=\mathbb N$ on a $\sigma$-finite measure space $(X,\algebra,\mu)$ 
where $F$ is nonsingular. 
Throughout this section, we assume that $A\subset X$ is an invariant set for the system \eqref{eq:general}, $\PP$ and $\UU$ are the Frobenius-Perron and Koopman operators of the system \eqref{eq:general} restricted to $A^c$, respectively.

\begin{definition}[Discrete-time Lyapunov density]
\label{def:DiscreteLyapunovDensity}
A measurable function $\rho$ defined on $A^c$ is called a \emph{discrete-time Lyapunov density for $A$} if it satisfies the following conditions:
\begin{itemize}
	\item[]DLD1: $\rho$ is positive: $\rho(x)>0\ \foralmost x\in A^c.$
	\item[]DLD2: $\rho$ is integrable away from $A$: \begin{align}\int_{A_\eps^c}\rho(x)m(dx)<\infty\ \forall\eps>0.\end{align}
	\item[]DLD3: $\rho$ is properly subinvariant on $A^c$: \begin{align}\PP\rho(x)<\rho(x)\ \foralmost x\in A^c.\end{align}
	\end{itemize}
\end{definition}

This definition is less restrictive than the definition of a Lyapunov measure in \cite{Vaidya2008}, as it does not assume a local attraction property of the attractor. 
For discrete time systems the main result is the following: 

\begin{theorem}[Almost global stability in discrete time]
\label{th:discrete}
An invariant set $A$ of the system \eqref{eq:general} for $\mathbb T=\mathbb N$ is almost globally stable if there exists a discrete-time Lyapunov density for $A$.
\end{theorem}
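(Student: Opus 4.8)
The plan is to fix the target scale $\eps>0$ and show that almost no orbit visits $A_\eps^c$ infinitely often. Put $E:=A_\eps^c\subset A^c$ and, for $x\in A^c$, let $g_\eps(x):=\sum_{j=0}^{\infty}\one_E(F^{j}(x))$ count the visits of the forward orbit of $x$ to $E$. Since $A$ is invariant we have $F^{-1}(A^c)\subset A^c$, hence $F^{-j}(E)\subset A^c$ for all $j$, so an orbit that ever enters $A$ visits $E$ only finitely often (and converges to $A$ trivially). Thus it suffices to prove that $g_\eps<\infty$ $m$-a.e.\ on $A^c$ for each $\eps>0$: taking the union over $\eps\in\mathbb Q_{>0}$ of the (measurable, null) exceptional sets yields $N$ with $m(N)=0$ such that for every $x\notin N$ the orbit of $x$ meets $A_\eps^c$ only finitely often for each rational $\eps$, i.e.\ $d(F^{t}(x),A)\to0$, which is exactly almost global stability of $A$.

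The heart of the proof is an $L^1$-estimate extracted from DLD3 by telescoping and then transported through the Frobenius--Perron/Koopman duality. Work with the sub-Markov operator $\PP:=\PP\restriction_{A^c}$ and with $\UU:=\UU\restriction_{A^c}$, and set $h:=\rho-\PP\rho$, which by DLD3 is nonnegative and strictly positive $\foralmost x\in A^c$. Using that $\PP$ is positive and additive on $\mathcal M^+(A^c)$ one gets, for every $n$,
\[
\sum_{j=0}^{n-1}\PP^{j}h=\sum_{j=0}^{n-1}\bigl(\PP^{j}\rho-\PP^{j+1}\rho\bigr)=\rho-\PP^{n}\rho\le\rho ,
\]
hence $\sum_{j=0}^{\infty}\PP^{j}h\le\rho$ a.e.\ on $A^c$; integrating over $E$ and using DLD2,
\[
\sum_{j=0}^{\infty}\int_{E}\PP^{j}h\,dm\le\int_{E}\rho\,dm<\infty .
\]
Now $\int_{E}\PP^{j}h\,dm=\langle\PP^{j}h,\one_E\rangle=\langle h,\UU^{j}\one_E\rangle=\int_{A^c}h\cdot(\one_E\circ F^{j})\,dm$, where the second equality is the $j$-fold duality \eqref{eq:duality} (valid for these measurable functions by Remark~\ref{re:duality}, with $\UU^{k}\one_E=\one_{F^{-k}E}$), and summing over $j$ and using Tonelli,
\[
\int_{A^c}h(x)\,g_\eps(x)\,dm(x)=\sum_{j=0}^{\infty}\int_{A^c}h\cdot(\one_E\circ F^{j})\,dm=\sum_{j=0}^{\infty}\int_{E}\PP^{j}h\,dm<\infty .
\]
Thus $h\,g_\eps\in\mathcal L_1(A^c)$; since $h>0$ a.e.\ on $A^c$, this forces $g_\eps<\infty$ a.e.\ on $A^c$, which is what was required, and the theorem follows as in the first paragraph. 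Note that compactness of $A$ is never used.

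The conceptually decisive step --- and the one I expect to demand the most care --- is the identity that turns $\sum_{j}\PP^{j}h$ into the orbit-visit counter $g_\eps$ via duality; this is precisely where the \emph{strict} inequality in DLD3 (equivalently $h>0$ a.e., not merely $h\ge0$) is essential, converting ``finite total dissipated mass over $E$'' into ``almost every orbit leaves $E$ eventually''. The remaining difficulties are measure-theoretic bookkeeping: one must check that $\PP\restriction_{A^c}$, rather than the full-space operator, is the correct object (mass leaking into $A$ is irrelevant because $F^{-j}(E)\subset A^c$); justify the telescoping and the sum/integral interchanges for a $\rho$ that need not lie in $\mathcal L_1(A^c)$, only be integrable away from $A$; and confirm that \eqref{eq:duality} applies to $h$ and to the iterates $\UU^{k}\one_E$ (allowing, as in \eqref{eq:Perronondensities}, infinite intermediate integrals). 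None of this is deep, but it must be handled carefully since the operators are only sub-Markov on $A^c$.
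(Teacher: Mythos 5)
Your proposal is correct and follows essentially the same route as the paper: you telescope $h=\rho-\PP\rho$ (the paper's $\rho_0$), pass through the duality $\langle\PP^{k}h,\one_{A_\eps^c}\rangle=\langle h,\UU^{k}\one_{A_\eps^c}\rangle$ with Tonelli, use strict positivity of $h$ to conclude the visit counter is finite a.e., and close with the same countable-$\eps$ reduction that the paper isolates as Lemma~\ref{lem:Koopmancharacterization}. The only (harmless) difference is that you bound $\sum_{j<n}\PP^{j}h=\rho-\PP^{n}\rho\le\rho$ directly, so you do not need the paper's Lemma~\ref{lem:limitdensity} and dominated convergence step.
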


{
\begin{remark}
In the language of attractor theory (see for instance \cite{Karabacak2011} and the references therein), Theorem~\ref{th:discrete} provides conditions for the existence of a maximal Milnor attractor for systems on noncompact spaces. Note that a maximal Milnor attractor, also called the likely limit set, always exists if the state space is compact \cite{Milnor1985}.
\end{remark}

\begin{remark}
 For the case where $A^c$ is also invariant, $\PP\restriction_{A^c}$ is a Markov operator, namely it preserves the total mass of a measurable function $\rho$. This together with the proper subinvariance of $\rho$ implies that the Lyapunov density $\rho$ has infinite mass, namely it is not integrable. The nonintegrability of $\rho$ makes it difficult to approximate numerically with the methods in literature, such as the moment method for approximating finite measures \cite{LassarreMomentBook2010}.  
\end{remark}
}

We will use the following characterization for almost global stability of an invariant set $A$.




\begin{lemma}
\label{lem:Koopmancharacterization}
An invariant set $A$ is almost globally stable if and only if, for all $\eps>0$, $\sum_{k=0}^\infty\UU^k\one_{A_\eps^c}(x)<\infty\ \foralmost x$. 
\end{lemma}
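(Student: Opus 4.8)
The plan is to unwind the Koopman iterates into orbit--visit counts and then handle a quantifier exchange by monotonicity in $\eps$. First I would record that, since $\UU g(x)=g(F(x))$, we have $\UU^k\one_{A_\eps^c}(x)=\one_{A_\eps^c}(F^k(x))$, so that
\[
S_\eps(x):=\sum_{k=0}^\infty\UU^k\one_{A_\eps^c}(x)=\#\{k\in\mathbb N:\ F^k(x)\notin A_\eps\}
\]
is precisely the number of iterates of $x$ lying outside the $\eps$-neighbourhood of $A$ (with $\one_{A_\eps^c}$ read as the function that vanishes on $A_\eps\supseteq A$). Consequently, for a fixed $x$, $S_\eps(x)<\infty$ if and only if there is $K$ with $F^k(x)\in A_\eps$ for all $k\ge K$, i.e. the forward orbit of $x$ is eventually trapped in $A_\eps$.

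Next I would state the pointwise equivalence: for fixed $x$, the convergence $x(t)\to A$ as $t\to\infty$ --- meaning $\inf_{y\in A}d(F^k(x),y)\to 0$ --- holds if and only if for every $\eps>0$ the orbit of $x$ is eventually in $A_\eps$, which by the previous step is the same as $S_\eps(x)<\infty$ for every $\eps>0$. (No closedness hypothesis on $A$ is needed here, since $d(\cdot,A)=d(\cdot,\overline{A})$, so the description of convergence to $A$ through the neighbourhoods $A_\eps$ is unconditional.) Combining this with the definition of almost global stability, $A$ is almost globally stable exactly when there is a Lebesgue-null set $N$ such that $S_\eps(x)<\infty$ for every $x\in N^c$ and every $\eps>0$.

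It then remains to show that this last statement is equivalent to the one in the Lemma, namely that for every $\eps>0$ there is a null set $N_\eps$ with $S_\eps(x)<\infty$ for all $x\notin N_\eps$. One direction is trivial (take $N_\eps=N$). For the converse, the key observation is monotonicity: if $0<\eps'\le\eps$ then $A_\eps^c\subseteq A_{\eps'}^c$, hence $S_\eps(x)\le S_{\eps'}(x)$; therefore it suffices to control $S_\eps$ along a countable sequence $\eps_n\downarrow 0$. Putting $N:=\bigcup_n N_{\eps_n}$, a countable union of null sets and so null, one gets for every $x\in N^c$ that $S_{\eps_n}(x)<\infty$ for all $n$, and then $S_\eps(x)<\infty$ for every $\eps>0$ by monotonicity. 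This $N$ witnesses almost global stability, completing the equivalence.

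The argument is essentially bookkeeping, and the only genuinely delicate point --- the one I would state most carefully --- is the exchange of the quantifiers ``$\foralmost x$'' and ``for all $\eps>0$''. This exchange is legitimate here only because the family of conditions $\{S_\eps(x)<\infty\}_{\eps>0}$ is monotone in $\eps$ and can thus be reduced to a countable subfamily before taking the union of exceptional sets; without such monotonicity an uncountable union of null sets need not be null.
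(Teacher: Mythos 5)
Your proposal is correct and follows essentially the same route as the paper: the same pointwise identification of $\sum_k\UU^k\one_{A_\eps^c}(x)$ with the number of orbit visits to $A_\eps^c$ (finite iff the orbit is eventually trapped in $A_\eps$), and the same handling of the quantifier exchange via a countable sequence $\eps_n\downarrow 0$, the null set $N=\bigcup_n N_{\eps_n}$, and monotonicity of $A_\eps^c$ in $\eps$. Your remark explicitly flagging why the uncountable family of null sets can be reduced to a countable one is a nice touch of precision, but the substance matches the paper's argument.
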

\begin{proof}
Note that $T^nx\to A$ if and only if, for all $\eps>0$,  $\sum_{k=0}^\infty\UU^k1_{A_\eps^c}(x)<\infty$. In fact, $\sum_{k=0}^\infty\UU^k1_{A_\eps^c}(x)=\sum_{k=0}^\infty 1_{A_\eps^c}(T^k x)$ is equal to the number of visits of the trajectory $T^nx$ to the closed set $A_\eps^c$. Hence, it is finite if and only if there exists an $N(\epsilon,x)>0$ such that $T^n(x)\in A_\eps\ \forall n>N$.  namely $T^nx\to A$.

By the above statement, the necessity is trivial. To prove the sufficiency, choose a sequence of positive numbers $\{\eps_n\}\to 0$. By assumption, for each $\eps_n$, $\sum_{k=0}^\infty\UU^k\one_{A_{\eps_n}^c}(x)<\infty\ \foralmost x$, i.e. there exists a zero measure set $N_{\eps_n}$ such that $\sum_{k=0}^\infty\UU^k\one_{A_{\eps_n}^c}(x)<\infty\ \forall x\in N_{\eps_n}^c$. Let us define $N=\bigcup_n N_{\eps_n}$. Obviously, $N$ has zero measure and we will show that, for all $\eps>0$, $\sum_{k=0}^\infty\UU^k\one_{A_\eps^c}(x)<\infty\ \forall x\in N^c$. For a given $\eps>0$, we can choose an $n$ such that $\eps_n<\eps$. Then, $\sum_{k=0}^\infty\UU^k\one_{A_\eps^c}(x)<\sum_{k=0}^\infty\UU^k\one_{A_{\eps_n}^c}(x)<\infty\ \forall x\in N^c$, since $N^c\subset N_{\eps_n}^c$.  
\end{proof}

We will also use the following basic fact to prove Theorem~\ref{th:discrete}:

\begin{lemma}
\label{lem:limitdensity}
Condition DLD3 implies that $\foralmost x\in A^c$ $\lim_{n\to\infty}\PP^n\rho(x)$ exists and is smaller than $\rho(x)$.
\end{lemma}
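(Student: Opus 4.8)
The plan is to combine condition DLD3 with the order-preserving property of the positive operator $\PP=\PP|_{A^c}$. Set $g_n:=\PP^n\rho\in\mathcal M^+(A^c)$, so that $g_0=\rho$; DLD3 says precisely that $g_1\le g_0$ almost everywhere on $A^c$. If $\PP$ preserves such inequalities, then induction yields $g_{n+1}\le g_n$ a.e.\ for every $n$, so for almost every $x\in A^c$ the sequence $\bigl(g_n(x)\bigr)_{n\ge0}$ is non-increasing; since it is also bounded below by $0$ (positivity of $\PP$), it converges in $[0,\infty]$, which gives the existence of $\lim_{n\to\infty}\PP^n\rho(x)$ for a.e.\ $x$.

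First I would verify that $\PP$ is monotone: if $f,h\in\mathcal M^+(A^c)$ with $f\le h$ a.e., then by the analogue of \eqref{eq:Perronondensities} for the $A^c$-restricted operator, for every measurable $B\subset A^c$,
\[
\int_B\PP f\,dm=\int_{\Tinverse B}f\,dm\le\int_{\Tinverse B}h\,dm=\int_B\PP h\,dm,
\]
the middle inequality holding because $\{f>h\}$ is a null set; as $B$ ranges over all measurable subsets of $A^c$, this forces $\PP f\le\PP h$ almost everywhere. Applying this with $f=g_n$, $h=g_{n-1}$ carries the induction through, and only countably many null sets are discarded, so their union is again null; hence, for every $n$,
\[
0\le g_{n+1}(x)\le g_n(x)\le\cdots\le g_1(x)\le g_0(x)=\rho(x)\qquad\foralmost x\in A^c .
\]

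The second step is to pin down the limit. For $\foralmost x\in A^c$ and every $n\ge1$ we have $g_n(x)\le g_1(x)=\PP\rho(x)$; passing to the limit gives $\lim_{n\to\infty}\PP^n\rho(x)\le\PP\rho(x)$. Invoking DLD3 once more, namely $\PP\rho(x)<\rho(x)$ for $\foralmost x\in A^c$, we conclude $\lim_{n\to\infty}\PP^n\rho(x)<\rho(x)$ almost everywhere, which is exactly the assertion of the lemma.

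I do not expect any genuine obstacle: the argument is monotone convergence together with positivity of $\PP$. The only point requiring a little care is that $\rho$, and hence each iterate $g_n$, may a priori take the value $+\infty$ on a non-negligible set, so I deliberately avoid forming differences such as $\rho-\PP\rho$ and instead work throughout with the order-preserving property of $\PP$ and with limits valued in $[0,\infty]$ (recall that the integrals in \eqref{eq:Perronondensities} are allowed to be infinite). Under this convention every step above holds a.e.\ on $A^c$ with no integrability assumption on $\rho$.
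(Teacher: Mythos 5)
Your proof is correct and takes essentially the same route as the paper: both obtain the a.e.\ monotone decrease $\PP^{n+1}\rho\le\PP^n\rho$ from the positivity/order-preservation of $\PP$ (the paper applies the positive operator $\PP^n$ to $\rho-\PP\rho$, you verify order-preservation directly from the integral definition and induct) and then conclude by bounded monotone convergence, with strictness coming from $\lim_n\PP^n\rho\le\PP\rho<\rho$ a.e. Your avoidance of differences so as to permit possibly infinite values is a harmless refinement of the same argument.
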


\begin{proof}
Here, all arguments are valid $\foralmost x\in A^c$. First note that $\PP^n(\rho-\PP\rho)\geq 0$, since $\PP\rho<\rho$ and $\PP^n$ is a positive operator for all $n$. Therefore, $0\leq\PP^{n+1}\rho\leq\PP^n\rho$ which implies that $\PP^n\rho(x)$ is a decreasing sequence bounded from above by $\rho(x)$ and from below by $0$. Hence, the result follows.  
\end{proof}

\begin{proof}[Proof of Theorem~\ref{th:discrete}]
 By assumptions, there exists a measurable function $\rho>0$ such that $0<\PP\rho<\rho$ and $\rho$ is integrable on $A_\eps^c$ for every $\eps>0$. 
Define $\rho_0:=\rho-\PP\rho$. Clearly $\rho_0$ is positive $m$-a.e. and it follows that
 \begin{eqnarray*}
 \bar\rho_0&:=&\lim_{n\to\infty}\sum_{k=0}^n \PP^k \rho_0=\lim_{n\to\infty}\sum_{k=0}^n \PP^k \left(\rho-\PP\rho\right)\\
						&=&\lim_{n\to\infty} \left(\sum_{k=0}^n \PP^k \rho - \sum_{k=1}^{n+1}\PP^k\rho\right)=\lim_{n\to\infty} \left(\rho - \PP^{n+1}\rho\right)\\
            &=&\rho-\lim_{n\to\infty}\PP^{n+1}\rho.\\
 \end{eqnarray*}
By Lemma~\ref{lem:limitdensity}, the above pointwise limits exist and $\lim\PP^n\rho<\rho$ almost everywhere on $A^c$. Since $\rho$ is integrable on $A_\eps^c$, $\lim\PP^n \rho$ is also integrable on $A_\eps^c$ by Lebesgue's dominated convergence theorem. These imply that $\bar \rho_0$ is integrable on $A_\eps^c$. Therefore, using Tonelli's theorem with the counting measure on natural numbers and the duality between $\PP$ and $\UU$ (see Remark~\ref{re:duality}), we have
 \begin{eqnarray*}
\infty&>&  \langle\bar \rho_0,1_{A_\eps^c}\rangle=\sum_{k=0}^\infty\langle\PP^k\rho_0,1_{A_\eps^c}\rangle\\
  &=&\sum_{k=0}^\infty\langle\rho_0,\UU^k 1_{A_\eps^c}\rangle=\langle\rho_0,\sum_{k=0}^\infty\UU^k1_{A_\eps^c}\rangle\\
 \end{eqnarray*}
Since $\rho_0$ is positive $m$-a.e., $\sum_{k=0}^\infty\UU^k1_{A_\eps^c}$ is finite $m$-a.e. and from Lemma~\ref{lem:Koopmancharacterization}, $A$ is almost globally stable. Note that, although the last series may not be integrable it is measurable and the duality still works as discussed at the end of Subsection~\ref{sec:duality}.
\end{proof}

\section{Continuous-Time Case}
\label{sec:continuous}

In this section, we consider almost global stability of an invariant set $A$ of the system \eqref{eq:general} for $\mathbb T=\mathbb R$. Similarly to the discrete-time case, we prove that $A$ is almost globally stable if there exists a density $\rho$ defined on $A^c$ which is positive, properly subinvariant and integrable away from $A$. 

We consider the system \eqref{eq:general} with $\mathbb T=\mathbb R$ on a $\sigma$-finite measure space $(X,\algebra,\mu)$ 
where $F$ is continuously differentiable and solutions to \eqref{eq:general} exists for all initial conditions $x(0)$ and for all $t\geq 0$. We also assume that $A\subset X$ is a compact invariant set for the system \eqref{eq:general}, $\{\PP_t\}$ is the semigroup of Frobenius-Perron operators of the system \eqref{eq:general} restricted to $A^c$ and $\AAA$ is the infinitesimal operator of the continuous semigroup $\{\PP_t\}$, namely $\AAA\rho=-\nabla(F\rho)$ \cite{Lasota1994}. Here $\rho\in C^1(A^c,\mathbb{R}^n)$, the set of continuously differentiable functions from $A^c$ to $\mathbb{R}^n$.

Since $F$ is continuously differentiable, it is locally Lipschitz, and therefore finite-time solutions of \eqref{eq:general} vary continuously with respect to the initial states.  

\begin{definition}[Continuous-time Lyapunov density]
\label{def:ContinuousLyapunovDensity}
A $\rho\in C^1(A^c,\mathbb{R}^n)$ is called a \emph{continuous-time Lyapunov density for $A$} if it satisfies the following conditions:
\begin{itemize}
	\item[]CLD1: $\rho$ is positive: $\rho(x)>0\ \foralmost x\in A^c.$
	\item[]CLD2: $\rho$ is integrable away from $A$:
	\begin{align}\int_{A_\eps^c}\rho(x)\mu(dx)<\infty\ \forall\eps>0.\end{align}
	\item[]CLD3: $\rho$ is properly subinvariant on $A^c$:
	\begin{align}\AAA\rho(x)<0 \foralmost x\in A^c.\end{align}
	\end{itemize}
\end{definition}
Recall that, if $\rho$ is continuously differentiable, $\AAA\rho=-\nabla(F\rho)$.
\begin{theorem}[Almost global stability in continuous time]
\label{th:continuous}
An invariant compact set $A$ of the system \eqref{eq:general} for $\mathbb T=\mathbb R$ is almost globally stable if there exists a continuous-time Lyapunov density for $A$.
\end{theorem}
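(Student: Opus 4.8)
The plan is to obtain Theorem~\ref{th:continuous} as a corollary of Theorem~\ref{th:discrete} by discretising the flow. Write $\phi_t$ for the flow of \eqref{eq:general}; by the standing hypotheses each $\phi_t$ ($t\ge 0$) is a well-defined $C^1$ map $X\to X$, injective with everywhere invertible derivative, hence nonsingular, and a diffeomorphism onto the open set $\phi_t(X)$. Set $T:=\phi_1$. Since $A$ is (forward) invariant under the flow it is invariant under $T$, and the Frobenius--Perron operator of the discrete system $x^+=T(x)$ restricted to $A^c$ is precisely $\PP_1\restriction_{A^c}$. So the task splits into two parts: (i) show that a continuous-time Lyapunov density $\rho$ for $A$ is also a discrete-time Lyapunov density for $T$, so that Theorem~\ref{th:discrete} applies and yields a Lebesgue-null set $N$ with $\phi_n(x)\to A$ along $n\in\mathbb N$ for every $x\in N^c$; and (ii) upgrade this integer-time convergence to genuine convergence as $t\to\infty$.

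For step (i), conditions DLD1 and DLD2 of Definition~\ref{def:DiscreteLyapunovDensity} are literally CLD1 and CLD2, so only DLD3, i.e.\ $\PP_1\rho<\rho$ almost everywhere on $A^c$, needs work. On the set $W:=\phi_1(X)\cap A^c$ I would compute $t\mapsto\PP_t\rho$ directly along characteristics: writing $\PP_t\rho(x)=\rho(\phi_t^{-1}x)\,|\det D\phi_t^{-1}(x)|$ and using Liouville's formula for the evolution of the Jacobian gives the pointwise identity $\tfrac{d}{dt}\PP_t\rho(x)=\PP_t(\AAA\rho)(x)$ for $x\in W$ (valid since $F\rho\in C^1$, so $\AAA\rho=-\nabla\cdot(F\rho)$ is continuous), whence by the fundamental theorem of calculus
\[
\rho(x)-\PP_1\rho(x)=\int_0^1\PP_t(-\AAA\rho)(x)\,dt .
\]
The integrand is $\ge 0$ because $\PP_t$ is positive and $-\AAA\rho\ge 0$, so $\PP_1\rho\le\rho$; strictness for a.e.\ $x\in W$ follows because $\PP_t(-\AAA\rho)(x)=(-\AAA\rho)(\phi_t^{-1}x)\,|\det D\phi_t^{-1}(x)|$ vanishes only when $\phi_t^{-1}(x)$ lies in the Lebesgue-null set $Z:=\{\AAA\rho=0\}$, each time-$t$ slice of the bad set $\{(t,x):\phi_t^{-1}x\in Z\}$ is null (as $\phi_t$ is a diffeomorphism onto its image), and Tonelli's theorem on $(0,1)\times W$ then forces the $t$-slice of the bad set over a.e.\ $x$ to be null. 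On the remaining region $A^c\setminus\phi_1(X)$ there is nothing to prove: $\phi_1^{-1}$ of that region is empty, so $\PP_1\rho=0$ there, which is $<\rho$ a.e.\ by CLD1. This establishes DLD3, and Theorem~\ref{th:discrete} then produces the null set $N$.

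For step (ii) I would prove the sublemma: for each $\eps>0$ there is $\delta\in(0,1]$ such that $d(y,A)<\delta$ implies $d(\phi_s(y),A)<\eps$ for all $s\in[0,1]$. This is a compactness/continuous-dependence argument using that $A$ is compact (so $\overline{A_1}$ is compact), $A$ is closed and flow-invariant, and $(s,y)\mapsto\phi_s(y)$ is continuous on $[0,1]\times\overline{A_1}$: a failing sequence $y_k\to A$, $s_k\in[0,1]$ with $d(\phi_{s_k}(y_k),A)\ge\eps$ would subconverge to some $y^*\in A$ with $\phi_{s^*}(y^*)\notin A_\eps$, contradicting invariance. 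Granting the sublemma, fix $x\in N^c$ and $\eps>0$, choose $\delta$ and then $K\in\mathbb N$ with $d(\phi_n(x),A)<\delta$ for all $n\ge K$; for any $t\ge K$, writing $t=n+s$ with $n=\lfloor t\rfloor\ge K$ and $s\in[0,1)$, we get $\phi_t(x)=\phi_s(\phi_n(x))\in A_\eps$. Hence $\phi_t(x)\to A$ for all $x\in N^c$, i.e.\ $A$ is almost globally stable.

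I expect the main obstacle to be the strictness part of DLD3. The difficulty is that the flow need not be invertible ($\phi_1$ need not be onto), so one cannot simply transport ``a.e.'' through a single nonsingular map; instead one must peel off the region where $\PP_1\rho$ already vanishes and run the Tonelli argument only on $W$. A secondary technical point worth stating carefully is the pointwise generator identity $\tfrac{d}{dt}\PP_t\rho=\PP_t\AAA\rho$ for merely locally integrable $C^1$ densities on the non-invariant domain $A^c$ — abstract semigroup theory is awkward here since $\rho\notin\mathcal L_1$ in general, which is why I would obtain it from the explicit characteristic/Liouville computation rather than by abstract means.
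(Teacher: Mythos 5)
Your proposal is correct, and it follows the paper's overarching strategy (sample the flow and invoke Theorem~\ref{th:discrete}), but both halves are executed by genuinely different arguments. For the reduction to discrete time, the paper proves Lemma~\ref{lem:contdisc} by applying Rantzer's integral identity (Lemma~\ref{lem:Rantzer}) to arbitrary compact subsets $Z\subset A^c_\eps$, after a backward-flow/normality argument that keeps $\phi_\tau(Z)$ inside some $A^c_{\bar\eps}$, and then deduces $\PP\rho<\rho$ a.e.\ from the strict inequality of integrals over all compact sets; you instead differentiate $\PP_t\rho$ pointwise along characteristics, obtaining $\rho-\PP_1\rho=\int_0^1\PP_t(-\AAA\rho)\,dt$ on $\phi_1(X)\cap A^c$ and handling $A^c\setminus\phi_1(X)$ trivially, with a Tonelli argument for strictness (note only the cosmetic fix that the bad set should be $\{\AAA\rho\ge 0\}$ rather than $\{\AAA\rho=0\}$, which is still null by CLD3). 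For upgrading sampled-time convergence to $\phi_t(x)\to A$, the paper uses the countable family of sampling periods $\tau_n=\tau/n$, takes the union of the corresponding null sets, and derives a contradiction by lower-bounding the transit time between $\partial A_{\eps_1}$ and $\partial A_{\eps_2}$ via a Lipschitz constant of $F$ on the compact set $\overline{A}_{\eps_2}$, so that some multiple of a small enough $\tau_m$ must fall in each transit interval; you need only the single period $\tau=1$ together with a uniform finite-time stability sublemma (points $\delta$-close to the compact invariant $A$ stay $\eps$-close over $[0,1]$, proved by compactness and continuous dependence), which bridges the gaps between integer times directly. Your route is arguably cleaner — one null set instead of a countable union, and no transit-time estimate — while the paper's transit-time mechanism is what supports its remark that compactness of $A$ can be relaxed to closedness plus a local Lipschitz constant near $A$; your sublemma uses compactness of $A$ in essentially that same role, so both proofs rely on it at the same point.
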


\begin{remark}
	The condition of being compact for $A$ can be substituted by being closed and that there exists a Lipschitz constant of $F$ on any arbitrarily small neighbourhood of $A$.	
\end{remark}

{
\begin{remark}
 Theorem~\ref{th:continuous} generalizes Rantzer's theorem for almost global stability of equilibria \cite{Rantzer2001} to invariant sets with a slight change in the integrability assumption. In Rantzer's theorem, $\frac{\rho \|F\|}{|x|}$ is assumed to be integrable away from $A$; whereas in Theorem~\ref{th:continuous}, integrability of $\rho$ (away from $A$) is required along with the assumption that the solutions exist globally. 
Example~\ref{ex1} below shows that for some systems a function $\rho$ may satisfy the latter condition but not the former. We note here that the version of Rantzer's theorem  (implied by its  proof in \cite{Rantzer2001}) that assumes the integrability of $\rho$ and the boundedness of $F/|x|$ is more conservative than Theorem~\ref{th:continuous}, since the latter condition is only a sufficient condition for the global existence of solutions.
\end{remark}
}

We restate the following lemma from \cite{Rantzer2001}:
\begin{lemma}
\label{lem:Rantzer}
Consider an open set $D\subset\mathbb R^n$. Let $F,\rho\in\mathbb C^1(D,\mathbb R^n)$, where $\rho$ is integrable. Let $\phi_t(x_0)$ be the solution at $x_0$ of the system \eqref{eq:general} with $\mathbb T=\mathbb R$. For a measurable set $Z$, assume that $\phi_\tau(Z):=\{\phi_\tau(x)\mid x\in Z \}\subset D$ for all $\tau\in[0,t]$. Then
\begin{equation}
 \label{cond:Liouville-like}
 \int_{\Phi_t(Z)}\!\!\rho(x)dx-\int_{Z}\!\!\rho(x)dx=\int_{0}^t\int_{\Phi_\tau(Z)}\!\!\left[\nabla\cdot( F\rho)\right](x)dx d\tau,
\end{equation}   
\end{lemma}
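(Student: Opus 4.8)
The statement is essentially a "Liouville-type" transport identity: it says that the rate of change of the $\rho$-mass of a set moving under the flow $\phi_t$ equals the integral of the divergence of the density current $F\rho$ over the moving set. I would prove it by reducing to the classical change-of-variables formula and the transport (Reynolds transport) theorem, exploiting that $F\in C^1(D,\mathbb R^n)$ so the flow $\phi_t$ is a $C^1$-diffeomorphism onto its image on the relevant time interval, and that the whole trajectory segment $\phi_\tau(Z)$, $\tau\in[0,t]$, stays inside $D$ by hypothesis.

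First I would pull the left-hand side back to the fixed domain $Z$ by the change of variables $x=\phi_\tau(z)$, writing $\int_{\phi_\tau(Z)}\rho(x)\,dx=\int_Z\rho(\phi_\tau(z))\,|\det D_z\phi_\tau(z)|\,dz$, where $D_z\phi_\tau$ is the Jacobian of the time-$\tau$ flow map; since $\phi_0=\mathrm{id}$ the Jacobian determinant is positive for all $\tau$ in the compact interval $[0,t]$ (by continuity and nonvanishing, using that $\phi_\tau$ is a diffeomorphism), so the absolute value can be dropped. Then I would differentiate in $\tau$ under the integral sign. The key classical ingredient is the variational (Liouville) equation $\frac{d}{d\tau}\det D_z\phi_\tau(z)=(\nabla\cdot F)(\phi_\tau(z))\,\det D_z\phi_\tau(z)$, together with $\frac{d}{d\tau}\rho(\phi_\tau(z))=\nabla\rho(\phi_\tau(z))\cdot F(\phi_\tau(z))$. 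Combining these gives
\begin{equation*}
\frac{d}{d\tau}\Big(\rho(\phi_\tau(z))\det D_z\phi_\tau(z)\Big)=\big[(\nabla\rho\cdot F)+\rho\,(\nabla\cdot F)\big](\phi_\tau(z))\,\det D_z\phi_\tau(z)=\big[\nabla\cdot(F\rho)\big](\phi_\tau(z))\,\det D_z\phi_\tau(z).
\end{equation*}
Changing variables back via $x=\phi_\tau(z)$ turns the right-hand side into $\big[\nabla\cdot(F\rho)\big](x)$ integrated over $\phi_\tau(Z)$, and integrating the resulting identity $\frac{d}{d\tau}\int_{\phi_\tau(Z)}\rho\,dx=\int_{\phi_\tau(Z)}\nabla\cdot(F\rho)\,dx$ from $0$ to $t$ yields \eqref{cond:Liouville-like}.

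The main obstacle is not the formal manipulation but making the differentiation under the integral sign and the interchange of $\frac{d}{d\tau}$ with $\int_Z$ rigorous in sufficient generality: one needs local uniform bounds (over $z$ in $Z$ and $\tau$ in $[0,t]$) on $\rho(\phi_\tau(z))$, $\nabla\rho(\phi_\tau(z))$, $F(\phi_\tau(z))$, $\nabla\cdot F(\phi_\tau(z))$ and the Jacobian $D_z\phi_\tau(z)$, which one obtains by covering the compact "tube" $\{\phi_\tau(x):x\in Z,\ \tau\in[0,t]\}$ — or, if $Z$ itself is not precompact, by a localization/truncation argument on $Z$ combined with the dominated convergence theorem, using the integrability of $\rho$. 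Care is also needed to justify dropping the absolute value on the Jacobian determinant uniformly on $[0,t]$, and to invoke Fubini/Tonelli to write the time-integral of the divergence term as an iterated integral. I would remark that this is exactly Lemma~1 of \cite{Rantzer2001}, whose proof (via the divergence theorem and Liouville's formula along the above lines) we simply cite; our only use of it is as the continuous-time analogue of the discrete telescoping identity in the proof of Theorem~\ref{th:discrete}.
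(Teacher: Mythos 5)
Your proposal is correct, but note that the paper itself offers no proof of this lemma: it is explicitly restated from \cite{Rantzer2001} (it is Rantzer's Lemma~A.1) and used as a black box, so there is no in-paper argument to compare against. What you sketch --- pulling $\int_{\phi_\tau(Z)}\rho\,dx$ back to $Z$ by the change of variables $x=\phi_\tau(z)$, invoking Liouville's formula $\frac{d}{d\tau}\det D_z\phi_\tau=(\nabla\cdot F)(\phi_\tau(z))\det D_z\phi_\tau$ together with the chain rule to get $\frac{d}{d\tau}\bigl(\rho(\phi_\tau(z))\det D_z\phi_\tau(z)\bigr)=[\nabla\cdot(F\rho)](\phi_\tau(z))\det D_z\phi_\tau(z)$, and then integrating in $\tau$ --- is precisely the standard transport-theorem argument underlying Rantzer's appendix proof, so in substance you have reproduced the cited proof rather than found a different route. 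The one point to be careful about is the one you flag yourself: with only ``$\rho$ integrable'' in the hypotheses, the interchange of $\frac{d}{d\tau}$ (equivalently, the Fubini step turning the pointwise fundamental-theorem-of-calculus identity into the iterated integral in \eqref{cond:Liouville-like}) is not automatic for an arbitrary measurable $Z$, since $\nabla\cdot(F\rho)$ need not be absolutely integrable over the tube; you need either Tonelli (sign-definite $\nabla\cdot(F\rho)$) or a compact tube to dominate. This is harmless for the paper's purposes, because in its only application (Lemma~\ref{lem:contdisc}) $Z$ is compact, the trajectory tube $\phi([-T,0],Z)$ is compact and bounded away from $A$, and $\nabla\cdot(F\rho)>0$, so all your uniform bounds hold trivially; but if you state the lemma in the generality given here, you should either add that caveat or structure the proof as ``first for $Z$ with compact closure in $D$, then exhaust a general $Z$ and pass to the limit by monotone/dominated convergence.'' Your closing remark that the lemma is the analogue of the telescoping identity in Theorem~\ref{th:discrete} is a fair heuristic, though its actual role in the paper is to establish the proper subinvariance $\PP\rho<\rho$ of $\rho$ for every time-$\tau$ map, which is then fed into the discrete-time theorem.
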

This leads to the following lemma:
\begin{lemma}
\label{lem:contdisc}
Assume that $\rho$ is a continuous-time Lyapunov density for $A$ for the system \eqref{eq:general}, then for any $\tau>0$, $\rho$ is a discrete-time Lyapunov density for $A$ for the time-$\tau$ map $\phi_\tau$ of \eqref{eq:general} 

\end{lemma}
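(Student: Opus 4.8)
The conditions DLD1 (positivity) and DLD2 (integrability away from $A$) for the time-$\tau$ map $\phi_\tau$ are literally CLD1 and CLD2: they do not involve the dynamics at all, so nothing is to be proved there. One should also remark that $\phi_\tau$ is nonsingular, as required for the discrete-time framework: since $F$ is $C^1$, $\phi_\tau$ is a $C^1$ diffeomorphism onto its open image with locally Lipschitz inverse $\phi_{-\tau}$, and hence $\phi_\tau^{-1}(B)=\phi_{-\tau}\!\bigl(B\cap\phi_\tau(\mathbb R^n)\bigr)$ is Lebesgue-null whenever $B$ is. The real content is DLD3, namely $\PP_\tau\rho(x)<\rho(x)$ for a.e.\ $x\in A^c$, and the plan is to extract it from the Liouville-type identity of Lemma~\ref{lem:Rantzer}. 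Two geometric facts make that identity applicable. \emph{(i)} $A^c$ is backward invariant under the flow: if $x\in A^c$ and $\phi_{-s}(x)$ is defined for some $s\ge 0$, then $\phi_{-s}(x)\in A^c$, for otherwise $x=\phi_s(\phi_{-s}(x))\in\phi_s(A)\subset A$. Hence for any $V\subset A^c$ and $Z:=\phi_\tau^{-1}(V)$, every point $\phi_{\tau'}(z)=\phi_{-(\tau-\tau')}(\phi_\tau(z))$ with $z\in Z$, $\tau'\in[0,\tau]$, lies in $A^c$. \emph{(ii)} Using compactness of $A$: for every $\delta>0$ there is $\delta'>0$ with $\phi_s(A_{\delta'})\subset A_\delta$ for all $s\in[0,\tau]$, because $\{(s,x)\in[0,\tau]\times\mathbb R^n:\phi_s(x)\in A_\delta\}$ is open and contains the compact set $[0,\tau]\times A$; taking complements and using $y=\phi_s(\phi_{-s}(y))$ gives $\phi_{-s}(A_\delta^c)\subset A_{\delta'}^c$ for all $s\in[0,\tau]$.

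Now fix a measurable $V\subset A^c$ with $V\subset A_\delta^c$ for some $\delta>0$, set $Z=\phi_\tau^{-1}(V)$, and take the $\delta'$ from \emph{(ii)}. By \emph{(i)}--\emph{(ii)} the whole finite-time tube $\bigcup_{\tau'\in[0,\tau]}\phi_{\tau'}(Z)$ lies in $A_{\delta'}^c$, so Lemma~\ref{lem:Rantzer} applies on the open set $D=\{x:\operatorname{dist}(x,A)>\delta'/2\}$, on which $\rho$ is $C^1$ and integrable by CLD2. Since $\phi_\tau(Z)=V\cap\phi_\tau(\mathbb R^n)$ and, by the defining relation \eqref{eq:Perronondensities} of the Frobenius--Perron operator (with $T=\phi_\tau$), $\int_Z\rho\,dm=\int_V\PP_\tau\rho\,dm$, identity \eqref{cond:Liouville-like} becomes
\[
\int_{V\cap\phi_\tau(\mathbb R^n)}\!\!\rho\,dm-\int_V\PP_\tau\rho\,dm=\int_0^\tau\!\!\int_{\phi_{\tau'}(Z)}\![\nabla\cdot(F\rho)]\,dx\,d\tau'\ \ge\ 0,
\]
where nonnegativity is exactly CLD3, i.e.\ $\nabla\cdot(F\rho)=-\AAA\rho>0$ a.e.\ on $A^c\supset\phi_{\tau'}(Z)$. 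In particular $\int_V\PP_\tau\rho\,dm\le\int_V\rho\,dm$ for every such $V$.

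To upgrade this mean inequality to a pointwise strict one, put $E=\{x\in A^c:\PP_\tau\rho(x)\ge\rho(x)\}$ and $E_n=E\cap A_{1/n}^c$. If $m(E)>0$, then $m(E_n)>0$ for some $n$; applying the displayed relation with $V=E_n$ and using $\PP_\tau\rho\ge\rho$ on $E_n$ forces the left-hand side to be both $\le 0$ and $\ge 0$, hence $0$, and likewise the right-hand side to vanish. Vanishing of the right-hand side forces $m\bigl(\phi_{\tau'}(\phi_\tau^{-1}(E_n))\bigr)=0$ for a.e.\ $\tau'\in[0,\tau]$ (the integrand being positive a.e.\ on $A^c$); choosing such a $\tau'_0\in(0,\tau]$ and invoking nonsingularity of $\phi_{\tau'_0}$ gives $m(\phi_\tau^{-1}(E_n))=0$, whence $\int_{E_n}\PP_\tau\rho\,dm=\int_{\phi_\tau^{-1}(E_n)}\rho\,dm=0$. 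Since $\rho>0$ a.e.\ (CLD1) and $m(E_n)>0$, this is a contradiction. Hence $m(E)=0$, which is precisely DLD3, and the lemma follows.

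I expect the integrability bookkeeping to be the main obstacle: Lemma~\ref{lem:Rantzer} needs $\rho$ integrable on a domain containing the finite-time trajectory tube $\bigcup_{\tau'}\phi_{\tau'}(Z)$, whereas $\rho$ is integrable only away from $A$ and need not be integrable near $A$ at all. The resolution is the uniform estimate $\phi_{-s}(A_\delta^c)\subset A_{\delta'}^c$ for $s\in[0,\tau]$ in \emph{(ii)}, and this is exactly where compactness of $A$ is used (and where the closed-plus-locally-Lipschitz substitute mentioned in the Remark after Theorem~\ref{th:continuous} would take over). A secondary subtlety is that Lemma~\ref{lem:Rantzer} only yields $\PP_\tau\rho\le\rho$ in the mean; passing to the a.e.\ strict inequality relies on positivity of $\nabla\cdot(F\rho)$ together with nonsingularity of the $\phi_{\tau'}$, rather than on a naive Lebesgue-differentiation argument, which would only give $\PP_\tau\rho\le\rho$ pointwise a.e.
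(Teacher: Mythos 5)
Your proposal is correct and follows essentially the same route as the paper: both reduce DLD3 to the Liouville-type identity of Lemma~\ref{lem:Rantzer} applied over a finite-time tube kept away from $A$ by a compactness argument, and convert the resulting integral inequality into $\PP\rho<\rho$ a.e.\ via the defining relation \eqref{eq:Perronondensities} of the Frobenius--Perron operator. The differences are only in execution: the paper takes a compact $Z\subset A_\eps^c$, flows it backward and separates the compact tube $\phi([-T,0],Z)$ from $A$ by normality, concluding the a.e.\ strict inequality rather tersely from the arbitrariness of $Z$, whereas you flow preimages forward using a tube-lemma estimate based on compactness of $A$, and you make explicit the nonsingularity of $\phi_\tau$, the possible non-surjectivity of the semiflow, and the upgrade from the mean inequality to pointwise strictness --- points the paper leaves implicit.
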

\begin{proof}
Let $Z\subset A^c$ be a compact set contained in $A^c_\eps$ for some $\eps>0$. 
Choose a $T \in\mathbb R$.

Since $Z$ and $[-T,0]$ are compact sets and the flow map $\phi(t,x) \equiv \phi_t(x) \equiv \phi_x(t)$ is continuous in $t$ and $x$, the set $\phi([-T,0],Z)$ is compact. By the invariance of $A$, any trajectory $\phi_t(z)$, $z\in Z$ is outside $A$, therefore $A$ and $\phi([-T,0],Z)$ are disjoint. Since $\mathbb R^n$ is a normal space, there exists a neighborhood of $A$ that is disjoint from $\phi([-T,0],Z)$. Hence, there exists a $\bar\eps>0$ such that $\phi_\tau(Z)\subset A^c_{\bar\eps}$ for all $\tau\in[-T,0]$.



Now, we can apply Lemma~\ref{lem:Rantzer} with $t=-T$ and $D=A^c_{\bar\eps}$. Since $\nabla(F\rho)>0$ and $\int_{\phi_{-T}(Z)}\rho(x)dx=\int_{Z}\PP\rho(x)dx$, \eqref{cond:Liouville-like} implies that $\int_Z\PP\rho(x)dx<\int_Z\rho(x)dx$. Since $Z$ is an arbitrary compact subset of $A^c_\eps$, $\PP\rho(x)<\rho(x)$ for almost all $x\in A^c_\eps$. Since $\eps>0$ is also arbitrary, we have $\PP\rho(x)<\rho(x)$ for almost all $x\in A^c$. 
\end{proof}



\begin{figure}[hpb]
    \centering
		\makebox{\parbox{3in}{
		\fontsize{15}{0}\selectfont
		\scalebox{.6}{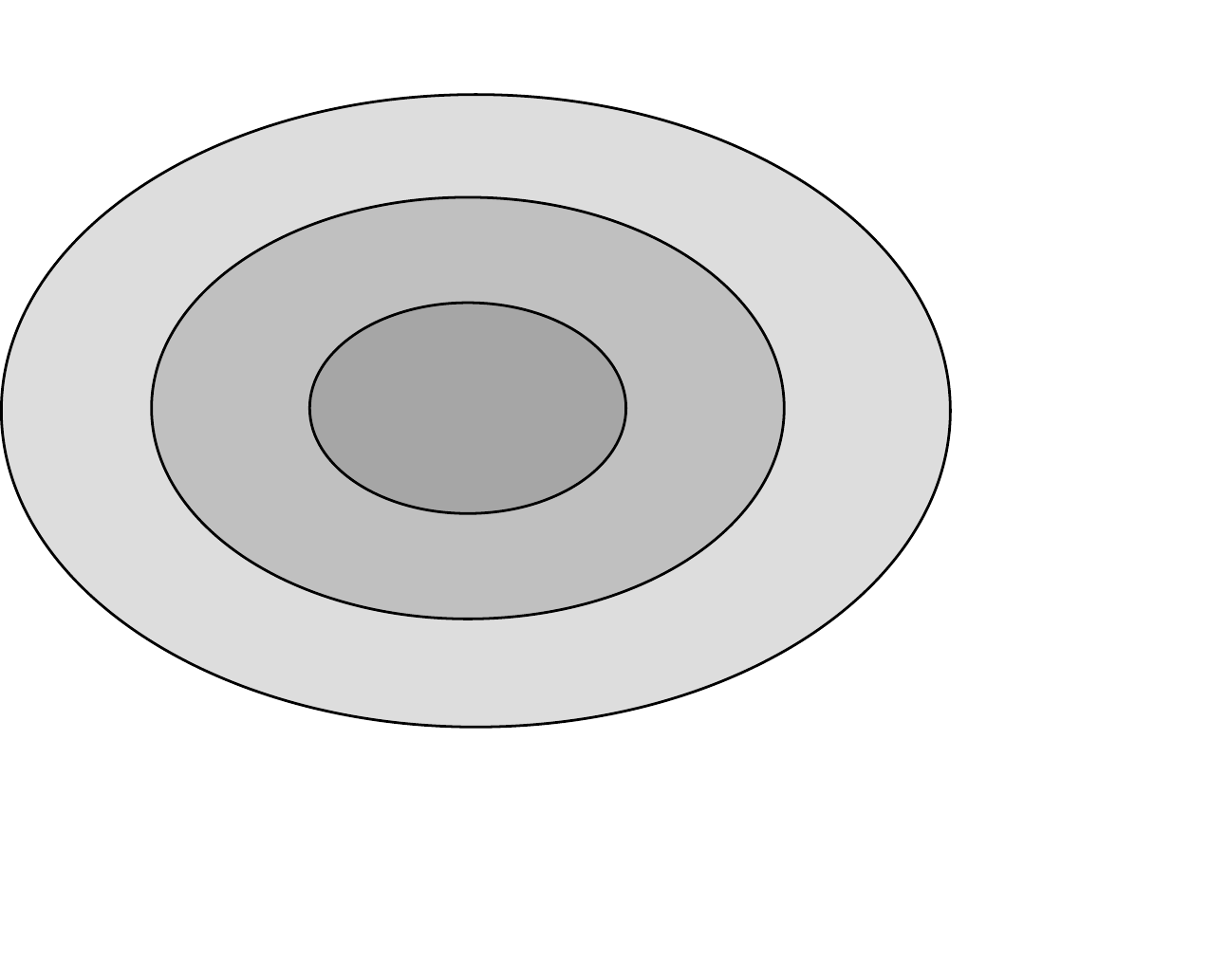}
		}}
    \caption{Illustration of the proof of Theorem~\ref{th:continuous} }\label{fig:IllustrationThmContinuous}
\end{figure}

We are now ready for the proof of Theorem~\ref{th:continuous}:
\begin{proof}[Proof of Theorem~\ref{th:continuous}]
	We pick a positive number $\tau$, and define a sequence $(\tau_n = \frac{\tau}{n})$. By Lemma~\ref{lem:contdisc}, $\rho$ is a discrete-time Lyapunov density for the flow maps $\phi_{\tau_n}(x)$ for $n=1,2,\dots$. For each $n$, Theorem~\ref{th:discrete} implies that there exists a set $N_n$ of zero measure such that $x\in (N_n)^\text c$ implies
$\phi_{k\tau_n}(x) \to A$ as $k \to \infty$. Define $N := \bigcup_{n=1}^\infty N_{n}$, which has zero measure. Then, for any $n$ and for any $x\in N^c$, $\phi_{k\tau_n}(x) \to A$ as $k \to \infty$. Hence, for any $\epsilon_1 >0$, there exits $M(n)$ such that for $k > M(n)$, $d(\phi_{k\tau_{n}}(x), A) < \epsilon_1$. 
		
	We will show that $\phi_t(x)\to A$ for all $x\in N^c$. For some $x\in N^c$, suppose that $\phi_t(x)$ does not converge to $A$ as $t \to \infty$. Then, there is $\epsilon_2 >0$ and a sequence  $({t}_k)$  such that $\lim_{k \to \infty} {t}_k = \infty$,  and $\phi_{{t}_k}(x) \in A_{\eps_2}^c$. Pick $\eps_1<\eps_2$, a subsequence $(\underline{t}_k)$ of $(k\tau)$ and a subsequence $(\overline{t}_n)$ of $({t}_n)$ such that $\overline{t}_n \in ]\underline{t}_n,\underline{t}_{n+1}[$, $\phi_x(\underline{t}_n) \in A_{\epsilon_1}$. The situation is illustrated in Fig.\ref{fig:IllustrationThmContinuous}.
	Such subsequences can be chosen as $\lim_{k \to \infty} k \tau = \infty$ and for $k > M(1)$, $\phi_x(k \tau) \in A_{\epsilon_1}$.

	Thanks to the continuity of the flow map, each of the times $\overline{t}_k$ can be decreased by $ \overline{\delta}_k$ such that $\phi_x(\overline{t}_k - \overline{\delta}_k) \in \partial A_{\epsilon_2}$, furthermore each of the times  $\underline{t}_k$ can be increased by $\underline{\delta}_k$ such that $\phi_x(\underline{t}_k + \underline{\delta}_k) \in \partial A_{\epsilon_1}$ and $\phi_x(t) \in A_{\epsilon_2}-A_{\eps_1}$ for all $t\in]\underline{t}_n + \underline{\delta}_n, \overline{t}_{n} - \overline{\delta}_n[$. Let us define $\underline{t}_n' \equiv \underline{t}_n + \underline{\delta}_n$, and $ \overline{t}_{n}' \equiv \overline{t}_{n} - \overline{\delta}_n$.

	Since $A$ is compact, the closure $\overline{A}_{\epsilon_2}$ is compact, and there is a Lipschitz constant of the vector field $F$ on $\overline{A}_{\epsilon_2}$. Hence, $\inf\{\overline{t}_n'-\underline{t}_n'\} > 0$. 
	Pick now $\tau_m\in\{\tau_n\}$ such that $\tau_m < \inf\{\overline{t}_n'-\underline{t}_n'\}$. Then there is an infinite sequence $(k_n) \subset \mathbb N$ with $k_n \tau_m \in ]\underline{t}_n', \overline{t}_{n}'[$ for each $n \in \mathbb N$, such that $\phi_{k_n \tau_m}\in A_{\epsilon_2}-A_{\eps_1}$ for all $n$.  But $\lim_{n \to \infty} d(\phi_{k_n\tau_m}(x), A) = 0$; hence, we arrive at a contradiction.
\end{proof}


\section{Examples}
\label{sec:examples}
In this section, we present examples for Theorem~\ref{th:discrete} and Theorem~\ref{th:continuous}. 
Example~\ref{ex1discrete} and Example~\ref{ex1} are illustrative applications of the theory in one-dimensional case for discrete time and continuous time, respectively. Example~\ref{ex1} also shows that a function $\rho$ may satisfy the integrability condition given in Theorem~\ref{th:continuous} but do not satisfy the integrability condition given in \cite{Rantzer2001}. Finally, Example~\ref{ex2} demonstrates an application of the theory to an invariant set which is a heteroclinic cycle.


\begin{example}
\label{ex1discrete}
Consider the discrete-time dynamical system defined on $\mathbb R$
\begin{equation}
\label{eq:ex1discrete}
x(k+1)=T(x(k)),
\end{equation}
 where $T$ is the functional inverse of $S(x)=x+\alpha x^3$ for some $\alpha>0$. Here, we consider the invariant set $A=\{0\}$. It is easy to check that $A$ is globally asymptotically stable (by drawing $T$ and cobwebbing). Let us define $\rho(x)=\frac{1}{|x|^3}$. Then, $\PP \rho(x)=\rho(T^{-1}(x))/(dT(x)/dx)=\rho(S(x))\cdot (dS(x)/dx)=\frac{1+3\alpha x^2}{|x+\alpha x^3|^3}=\rho(x)\cdot \frac{1+3\alpha x^2}{(1+\alpha x^2)^3}$. Here, $\rho$ is a Lyapunov density, since it is positive on $\mathbb R-A$, integrable away from $A$ and it satisfies $\PP \rho (x)<\rho(x)$ on $\mathbb R-A$. The last inequality follows from
$(1+\alpha x^2)^3=1+3\alpha^2 x^4 + 3\alpha x^2 +\alpha^3 x^6>1+3\alpha x^2$.
\end{example}

\begin{example}
\label{ex1}
Consider the system
\begin{equation}
\label{eq:ex1}
\dot x=- \text{sign}(x) \cdot |x|^\alpha (1-e^{-|x|}),
\end{equation}
 where $\alpha>1$. Here, we consider the invariant set $A=\{0\}$. Note that the solutions of \eqref{eq:ex1} are well-defined for all time and for all initial conditions, since the right-hand side is locally Lipschitz and solutions are bounded ($\dot x$ and $x$ have different signs).  
 $\rho(x)={|x|^{-\alpha}}$ is a Lyapunov density for \eqref{eq:ex1}, since $\nabla\cdot(F\rho)=\frac{d}{dx}\left( {-\emph{sign}(x)(1-e^{-|x|})} \right)=e^{-|x|}>0$ outside $A$. Since $\alpha>1$, $\rho$ is integrable away from $A$. Hence, by Theorem~\ref{th:continuous}, the equilibrium $A$ is almost globally stable for positive $\alpha$. Note that, $\rho \|F\|/|x|=\frac{(1-e^{-|x|})}{|x|}$ is not integrable away from $A$, therefore, it does not satisfy the integrability condition in \cite{Rantzer2001}.
\end{example}

\begin{example}[A heteroclinic attractor (\cite{Rodrigues2016})]
\label{ex2}
Consider the following perturbed Hamiltonian system on $\mathbb R^2$ (shown in Fig.~\ref{fig:exam2}):
\begin{subequations}\label{eq:exam2}
\begin{align}
\dot x&=-y\\
\dot y&=x-x^3-\eps y\left( v(x,y)-\frac{1}{4} \right),
\end{align}
\end{subequations}
where $v(x,y)=\frac{2x^2-x^4+2y^2}{4}$ is the Hamiltonian for the system with $\eps=0$.

\begin{figure}
\centering
\makebox{\parbox{3in}{
\subfloat[$\epsilon=0$]{%
\resizebox*{7cm}{!}{\includegraphics{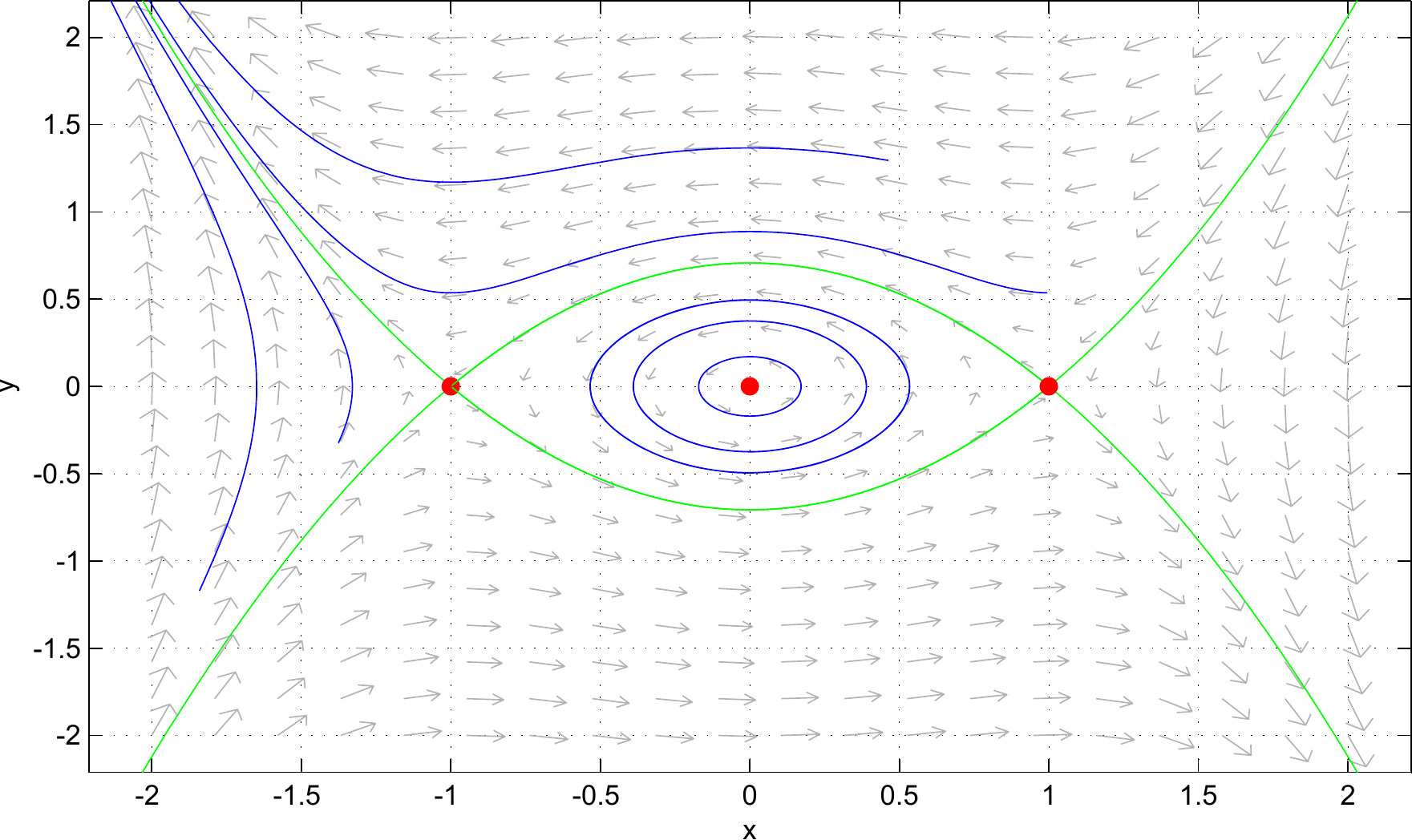}}}\hspace{5pt}
\subfloat[$\epsilon=1$]{%
\resizebox*{7cm}{!}{\includegraphics{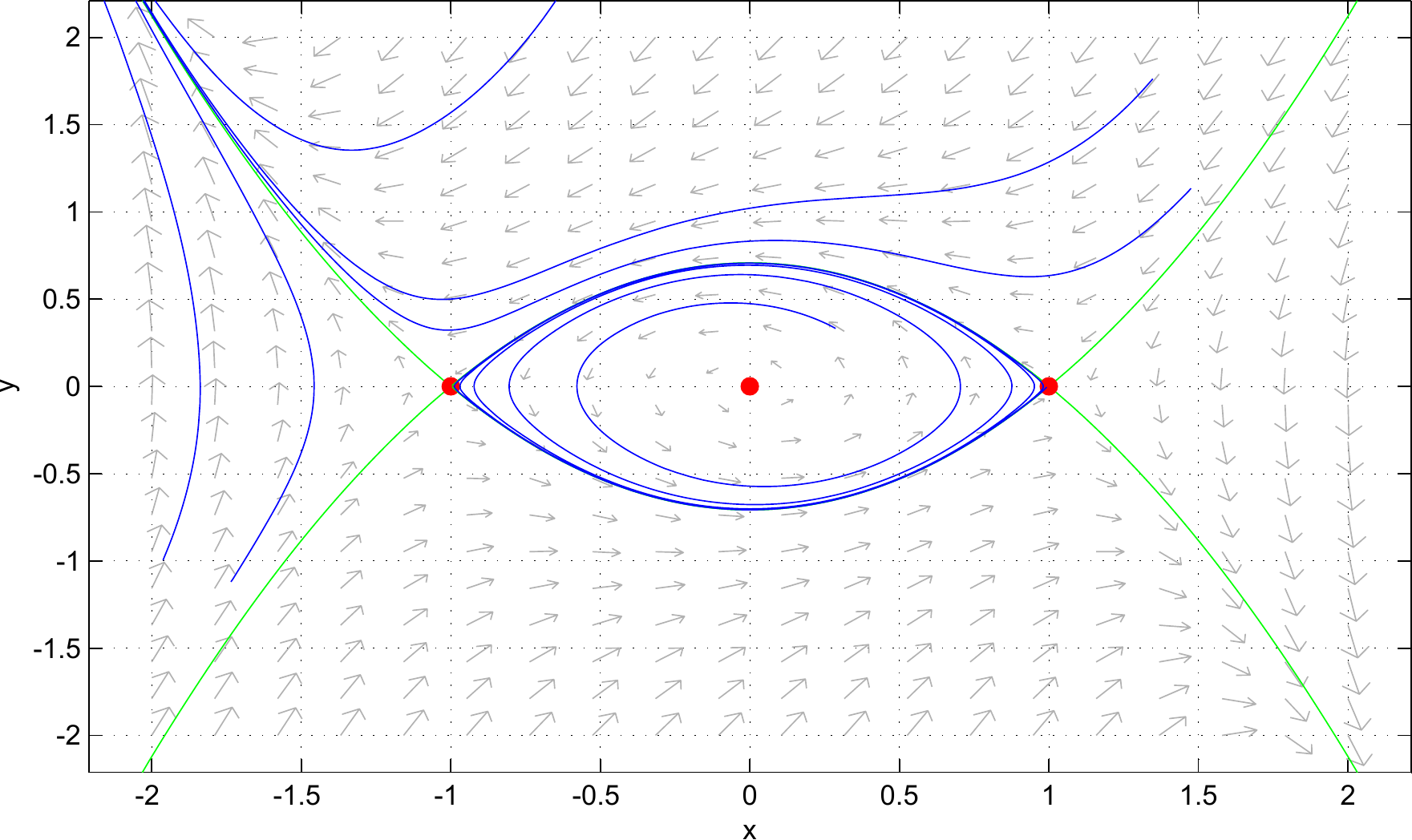}}}
}}
\caption{Heteroclinic trajectories (green), equilibrium points (red), and forward trajectories (blue) for the system given by \eqref{eq:exam2}}\label{fig:exam2}
\end{figure}


For all $\eps$, the system has three equilibria $p=(-1,0)$, $o=(0,0)$ and $q=(1,0)$. The equlibria $q$ and $p$ are contained in the level set $v(x,y)=\frac{1}{4}$. This level set also contains two heteroclinic trajectories connecting $p$ to $q$ and $q$ to $p$, forming a heteroclinic cycle, denoted by $A$, that persists for all $\eps$. We consider the almost global stability of $A$. In \cite{Rodrigues2016}, it is shown that $A$ attracts almost all initial points inside the cycle if $\eps>0$ as stated . We will prove this by considering the Lyapunov density 
\begin{equation}
\rho(x,y)=\frac{1}{\frac{1}{4}-v(x,y)}
\end{equation}
Note that we consider the system on the invariant closed set $\overline W=W\cup A$, where 
$W$ is the sub-level set $v(x,y)<\frac{1}{4}$. To prove that the heteroclinic cycle $A$ is attracting for almost all initial conditions in $W$, observe that $\rho(x)>0$ for almost all $x\in W$ (LD1), $\rho$ is integrable on $W$ away from $A$ (LD2) and $\nabla\cdot(\rho F)>0$ almost everywhere on $W$ (LD3), since
\begin{eqnarray*}
\nabla\cdot(\rho F)&=&\nabla\cdot\begin{pmatrix} \frac{-y}{\frac{1}{4}-v(x,y)} \\ \frac{x-x^3-\eps y(v(x,y)-\frac{1}{4})}{\frac{1}{4}-v(x,y)} \end{pmatrix}
\\
&=&\frac{4y(-4x+4x^3)}{{(1-2x^2+x^4-2y^2)}^2}\\
&&+\frac{(4x-4x^3)4y}{{(1-2x^2+x^4-2y^2)}^2}+\eps=\eps>0.
\end{eqnarray*}

As an ending remark, we point out that this example is essentially a closed loop version of the cart-pole system (or swinging pendulum on a cart). Indeed, for this system the control task is to swing up and balance a pendulum attached to a cart by means of controlling the horizontal movement of the cart. The swing up phase is usually determined by a control strategy which makes the pendulum asymptotically approach to a heteroclinic cycle exactly as depicted in Fig.~\ref{fig:exam2}(b). In this case, Fig.~\ref{fig:exam2}(b) should be interpreted as the phase space $x=\theta$ and $y=\dot{\theta}$, with $\theta$ the angle locating the pendulum and the equilibria $q$ and $p$ having numerical values $(0,0)$ and $(\pi,0)$, respectively. For a detailed account we refer to \cite[Chapter 3]{Fantoni2001}, and references therein.  

\end{example}




\section{Conclusion}

{
We have presented a Lyapunov density theorem for discrete-time systems without assuming compactness of state space and local stability of the invariant set. We have also obtained a new continuous-time Lyapunov density theorem for systems with well-defined solutions on the whole time. 

 
}

\section*{Acknowledgement}

{
Authors thank Alexandre Rodrigues for his explanations about the model in Example~\ref{ex2} and Ferruh \.Ilhan for his useful comments on this paper. 
}

\bibliographystyle{IEEEtran}
\bibliography{measuring}
\end{document}